\theoremstyle{plain}
\newtheorem{theorem}{Theorem}[section]
\newtheorem{lemma}[theorem]{Lemma}
\newtheorem{proposition}[theorem]{Proposition}
\theoremstyle{remark}
\newtheorem{remark}[theorem]{Remark}
\def\R{{\mathbb R}}
\def\N{{\mathbb N}}
\def\O{\mathcal O}
\def\({\left(}
\def\){\right)}
\def\<{\left\langle}
\def\>{\right\rangle}
\def\le{\leqslant}
\def\ge{\geqslant}
\def\Eq#1#2{\mathop{\sim}\limits_{#1\rightarrow#2}}
\def\Tend#1#2{\mathop{\longrightarrow}\limits_{#1\rightarrow#2}}
\def\d{{\partial}}
\def\e{{\rm e}}
\def\eps{\varepsilon}
\def\l{\lambda}
\def\si{{\sigma}}
\def\U{{\mathcal U}}
\numberwithin{equation}{section}
\begin{document}

\title[Nonlinear coherent states in periodic
  potentials]{Nonlinear dynamics of semiclassical coherent states in periodic
  potentials}

\author[R. Carles]{R\'emi Carles}
\address{CNRS \& Univ. Montpellier~2\\ Math\'ematiques 
  CC~051\\ 34095 Montpellier\\ France} 
\email{Remi.Carles@math.cnrs.fr}
\author[C. Sparber]{Christof Sparber}
\address{Department of Mathematics, Statistics, and Computer Science\\
University of Illinois at Chicago\\
851 South Morgan Street
Chicago\\ Illinois 60607, USA}
\email{sparber@uic.edu}

\begin{abstract}
We consider nonlinear Schr\"odinger equations with either local or
nonlocal nonlinearities. In addition, we include  
periodic potentials as used, for example, in matter wave experiments
in optical lattices. By considering the corresponding  
semiclassical scaling regime, we construct asymptotic solutions, which are 
concentrated both in space and in frequency around the effective 
semiclassical phase-space flow induced by Bloch's spectral
problem. The dynamics of these generalized coherent states is governed
by  
a nonlinear Schr\"odinger model with effective mass. In the case of
nonlocal nonlinearities we establish a novel averaging type result in
the critical case.  
\end{abstract} 
\thanks{2010 \emph{Mathematics Subject Classification.} {81Q20, 35A35,
    35Q40, 81Q05.} } 
\thanks{R.~Carles is supported by the French ANR project
  R.A.S. (ANR-08-JCJC-0124-01). C.~Sparber has been supported by the Royal
Society through his University Research Fellowship.}
\maketitle

\section{Introduction} \label{sec:setting}

Coherent states have been originally introduced in quantum mechanics
to describe wave packets minimizing the uncertainty principle. This
property makes coherent states highly  
attractive for the study of \emph{semiclassical asymptotics}, see, e.g.,
\cite{Hag80, FaGrLi09,  Pau97}. Indeed, it can be shown that for
Schr\"odinger equations with sub-quadratic potentials, coherent states
retain their shape,  
providing minimum uncertainty at all time in the quadratic case
\cite{CoRo06}, and up to Ehrenfest time in general \cite{BoRo02}. Recently,
extensions to weakly nonlinear situations have been studied in 
\cite{CaFe11, CaFe11b}. In addition, the semiclassical dynamics of  
coherent states under the influence of (highly oscillatory) periodic
potentials has been investigated by the authors in \cite{CaSp-p}. In
the present work we \emph{combine}  
the effects coming from periodic and nonlinear potentials. 

To this end, we consider \emph{nonlinear Schr\"odinger equations}
which, after scaling into dimensionless coordinates, appear in the
following semiclassical form: 
\begin{equation}\label{eq:GPE}
  i \eps\d_t \psi^\eps + \frac{\eps^2}{2} \Delta\psi^\eps = V_{\rm per}
     \left(\frac{x}{\eps}\right)\psi^\eps + \eps^\alpha
     f(|\psi^\eps|^2) \psi^\eps, \quad \psi^\eps_{\mid t=0} =  \psi^\eps_0,
 \end{equation}
 where $ t\in \R,\  x\in \R^d$, and $d \in \N$ denotes the spatial
 dimension (usually $d=3$). Moreover, $\eps \in (0,1]$ denotes a
 (small) semiclassical parameter, i.e. a dimensionless rescaled
 Planck's constant.   
 The factor $\eps^\alpha  $ measures the (asymptotic) strength of the
 nonlinearity: the larger the $\alpha>0$, the weaker the nonlinear
 effects.  
 In the following, we shall allow for two different types of
 \emph{gauge invariant nonlinearities}: 
 \begin{itemize}
 \item Local nonlinearities: $f(|\psi^\eps|^2) = \lambda
   |\psi^\eps|^{2\sigma}$, with $\sigma \in \N$ and $\lambda \in \R$,
   allowing for focusing (attractive) and defocusing (repulsive)
   situations. 
 \item Nonlocal nonlinearities of convolution type: $f(|\psi^\eps|^2)
   =  K \ast  |\psi^\eps|^2$, with $K(x)\in \R$ a given interaction
   kernel. 
 \end{itemize}
Finally, the term $V_{\rm per}(x/\eps)$ denotes a highly oscillatory
periodic potential. More precisely, let $\Gamma \simeq \mathbb Z^d$ be
some regular lattice, then  
we assume that for all $y\in \R^d$: $V_{\rm per}(y+\gamma) = V_{\rm
  per}(y)$ with $\gamma \in \Gamma$. In addition, we shall assume
$V_{\rm per} \in C^\infty (\R^d)$.
Equation \eqref{eq:GPE} describes the 
propagation of waves on macroscopic length- and time-scales, i.e. over
many periods of the periodic potential. The parameter $\eps \ll 1$  
consequently describes the ratio between microscopic (quantum
mechanical) and the macroscopic scales.

Nonlinear Schr\"odinger equations with periodic potentials arise in
various physical contexts: A by now classical example is the
mean-field description of electrons propagating in a  
crystalline solid \cite{GuRaTr88, Spohn96} under the additional
influence of a self-consistent electric field. The latter is usually
modeled by means of a nonlocal \emph{Hartree nonlinearity}  
$f(|\psi^\eps|^2) =    |\psi^\eps|^2 \ast 1/|\cdot|$, see, e.g.,
\cite{BMP} for a  
semiclassical study via Wigner measures.
Another situation in which \eqref{eq:GPE} applies is the description of
\emph{Bose-Einstein condensates} in so-called \emph{optical lattices},
cf. \cite{ChNi}. In  
the regime of dilute gases, such condensates can be modeled by the
Gross-Pitaevskii equation with cubic nonlinearity $\sigma = 1$,
cf. \cite{PiSt}. Note however, that other nonlinearities  
also arise, see, for example, \cite{Ber99} where a nonlocal term is
used for the description of superfluid Helium. In addition, strong
magnetic confinement  
allows for the experimental realization of quasi-one-dimensional
(cigar-shaped) condensates, or quasi two-dimensional condensates,  
motivating the fact that we consider \eqref{eq:GPE} in general
dimensions $d\in \N$, cf. \cite{KNSQ}. 
A third example for the appearances of \eqref{eq:GPE} stems from the
description of wave packets propagating within nonlinear  
photonic crystals \cite{photon} and where the nonlinear response of
the media is modeled via a \emph{Kerr nonlinearity} $\sigma = 1$. In
this case, the underlying assumption in the derivation of
\eqref{eq:GPE} is the  
existence of a preferred direction of propagation, implying that the
appropriate model is stated in dimension $d=1$. 

In all of these situations, the joint effects of nonlinearity,
periodicity and dispersion (or, quantum pressure), can lead to the
existence of \emph{stable localized states} conserving the form  
upon propagation and collisions. Gap solitons, discrete breathers and
compactons are examples of such states. Here we shall present another
possibility,  
which will arise from semiclassical description via coherent
states. Even though the nonlinearity in our case is weaker than in the
above mentioned situations  
(due to the fact that $\alpha>0$), the obtained asymptotic solutions
will nonetheless experience nonlinear effects in  
leading order, provided $\alpha$ is of critical size (to be made
precise later on). The latter will depend on the precise form of the
nonlinearity. 

To present our results, we recall the classical \emph{Bloch eigenvalue
  problem} \cite{Wil78}: 
\begin{equation}\label{eq:bloch}
H(k) \chi_m (\cdot,k) = \, E_m (k)\chi_m (\cdot, k), \quad  m \in \N. 
\end{equation}
Denoting by $Y$ the centered fundamental domain of $\Gamma$,
$E_m(k)\in \R$ and $\chi_m(\cdot, k)$ denote, respectively, the
$m$-th eigenvalue/eigenvector pair of 
\begin{equation}\label{eq:bham}
H(k)=  \frac{1}{2} \, \left(- i \nabla_y + k \right)^2+
V_{\rm per}\left (y\right) ,\quad y\in Y,
\end{equation}
parametrized by the \emph{crystal momentum} $k \in  Y^*\simeq \mathbb
T^d$. 
We shall assume that at $t=0$,  
\begin{equation}\label{eq:initial}
  \psi^\eps(0,x)\Eq \eps 0\eps^{-d/4} u_0 
\left(\frac{x-q_0}{\sqrt\eps}\right) \chi_m\left(\frac{x}{\eps},
  p_0 \right) \e^{i p_0 \cdot (x-q_0)/ \eps}, 
\end{equation}
where $u_0$ denotes some \emph{smooth and rapidly decaying
  profile}. In other words, the initial data $\psi_0^\eps$ can be
approximated by  
a highly oscillatory Bloch eigenfunction $\chi_m$ modulated by a
(generalized) coherent state, i.e.  
a wave function which is localized both in space and in frequency. 
\begin{remark} 
In particular, the choice $u_0(z)=\exp(-|z|^2/2)$ 
yields a classical coherent state, i.e. ground state of the harmonic
oscillator potential, as modulation.  
The states \eqref{eq:initial} are more general, though, since we can
allow for any $u_0 \in \mathcal S(\R^d)$, the Schwartz space of
rapidly decaying, smooth functions.  
We also remark that the same class of initial data have recently been
considered in \cite{CaSp-p}, where the situation of linear
Schr\"odinger equations with combined  
periodic and slowly varying external potentials has been considered
(see \ref{sec:appB} for more details).  
\end{remark}
Provided such initial data, we shall show that the solution of
\eqref{eq:GPE} can be approximately (in a sense to be made precise)
described by the following \emph{semiclassical wave packet}: 
\begin{equation}
  \label{eq:semibl}
 \psi^\eps(t,x)\Eq \eps 0 \varphi^\eps(t,x):=\eps^{-d/4} u 
\left(t,\frac{x-q(t)}{\sqrt\eps}\right) \chi_m\left(\frac{x}{\eps},
  p_0 \right) e^{i \phi_m(t,x) / \eps}
\end{equation}
where $q(t)= q_0 + t \nabla_k E_m(p_0)$ describes the
\emph{macroscopic shift} of the centre of mass and the highly
oscillatory phase function $\phi_m$ is   
\begin{equation}\label{eq:newphase}
\phi_m(t,x) =  p_0 \cdot (x- q_0 ) - t   E_m(p_0).
\end{equation}
To this end, we need to give sense to the \emph{group velocity}
$\nabla_k E_m(k)$ and thus, we have to assume from now on, that: 
\begin{equation}\label{ass:simple} 
\mbox{$E_m(k)$ is a \emph{simple eigenvalue} in the vicinity of $k = p_0$}. 
\end{equation}
(Of course, $p_0\in \R^d$ has to be understood modulo $\Gamma^*$ in this case.) 
In other words, we have to avoid that two Bloch bands cross at $p_0$,
that is $E_m(p_0) = E_n(p_0)$, for $m\not =n$. It is known that at
such crossing points $E_m(k)$ is no longer  
differentiable, causing the above asymptotic
description (which is based on an adiabatic decoupling of the slow and
fast degrees of freedom)  
to break down. 
\begin{remark} Clearly, the non-crossing condition given above
  restricts our choice for the initial wave vectors $ p_0 \in \R^d $. 
It is known however 
that the set of band crossings has Lebesgue measure zero. 
For example, in the case $d=1$, band crossings can only occur at $k=0$
or at the boundary of the Brillouin zone. 
\end{remark}
So far, we have not said what determines the profile $u=u(t,z)$
appearing in \eqref{eq:semibl}. Its time-evolution depends on the
strength of the nonlinearity, i.e. on the size of $\alpha >0$  
in the case of local nonlinearities (the situation for nonlocal ones
will be described later on).  
We shall find that the \emph{critical size} is $\alpha_c= 1+d\si/2$
and when $\alpha=\alpha_c$, 
$u$ solves the following \emph{homogenized nonlinear Schr\"odinger equation} 
\begin{equation}\label{eq:NLprofile}
  i\d_t u + \frac{1}{2}{\rm div}_z\((\nabla^2_k E_m\(p_0\))\cdot
  \nabla_z \) u = \lambda_m |u|^{2\sigma} u , \quad u_{\mid t=0} =  u_0,
\end{equation}
with effective coupling constant
\begin{equation*}
  \lambda_m  =  \lambda \int_Y\left| \chi_m\(y,p_0\)\right|^{2\si+2}dy.
\end{equation*}
Note that the dispersive properties of \eqref{eq:NLprofile} are determined by 
an \emph{effective mass matrix} $\nabla^2_k E_m\(p_0\)\in \R^{d\times d}$,
which itself depends on the choice of the initial wave vector
(cf. \cite{DSXZ} for a recent study).  

In the next section, we  derive this effective mass equation
from multi-scale expansion.  
A rigorous stability result is then  proved in Section
\ref{sec:main}. The case of nonlocal nonlinearities is treated in
Section \ref{sec:nonlocal}. As we shall see, in situations where the
kernel $K$ is homogeneous, the critical 
value $\alpha_c$ depends on the degree of homogeneity (like in the
case of a local nonlinearity), and the analogue of 
\eqref{eq:NLprofile} is an envelope equation with a nonlocal
nonlinearity. On the other hand, if the kernel $K$ is a smooth function, then
$\alpha_c=1$, and in sharp contrast with the other situations studied in this
paper, the analogue of \eqref{eq:NLprofile} for $\alpha=\alpha_c$ is
found to be a  
\emph{linear} equation, see Section~\ref{sec:smoothK}. 

\section{Multi-scale expansion}\label{sec:multi}

\subsection{The hierarchy of equations}
\label{sec:hierarchy}

Except for the treatment of the nonlinear term, we resume the strategy
followed in \cite{CaSp-p}. 
We seek the solution $\psi^\eps$ of \eqref{eq:GPE} in the form
\begin{equation}\label{eq:ansatz}
  \psi^\eps(t,x)= \eps^{-d/4}\,
  \U^\eps\(t,\frac{x-q(t)}{\sqrt\eps},\frac{x}{\eps}\) \e^{i \phi_m(t,x) /
    \eps}, 
\end{equation}
where the phase $\phi_m(t,x)$ is given by \eqref{eq:newphase}  and the function
$\U^\eps=\U^\eps(t,z,y)$ admits an asymptotic expansion
\begin{equation}\label{eq:DA}
  \U^\eps(t,z,y)\Eq \eps 0 \sum_{j\in \N}\eps^{j/2}U_j(t,z,y) 
\end{equation}
with smooth profiles $U_j$ which, in addition, are assumed to be
$\Gamma$-periodic 
with respect to $y$. For $\psi^\eps$ given by the ansatz
\eqref{eq:ansatz}, we compute  
\[
 \(i\eps \d_t
  \psi^\eps+\frac{\eps^2}{2}\Delta \psi^\eps - V_{\rm per}
  \(\frac{x}{\eps}\) \psi^\eps \) = \eps^{-d/4}\e^{i\phi_m/\eps}
  \sum_{j=0}^2 \eps^{j/2}b_j^\eps(t,z,y)
\Big|_{(z,y)=\(\frac{x-q(t)}{\sqrt\eps},\frac{x}{\eps}\)}
\]
with 
\begin{align*}
 b_0^\eps&= -\d_t \phi_m \U^\eps  +\frac{1}{2} \Delta_y \U^\eps
-\frac{1}{2}| p_0|^2\U^\eps 
+i p_0\cdot \nabla_y \U^\eps
-V_{\rm per}(y) \U^\eps
,\\
 b_1^\eps&= - i \dot q(t) \cdot \nabla_z \U^\eps
+\(\nabla_y\cdot\nabla_z\)\U^\eps + i p_0 \cdot \nabla_z \U^\eps,\\
 b_2^\eps&= i \d_t \U^\eps +\frac{1}{2}\Delta_z \U^\eps.
\end{align*}
Using \eqref{eq:newphase} and the fact that $\dot q(t) = \nabla_k
E_m(p_0)$, we can rewrite  
$b_0^\eps, b_1^\eps$ as  
\begin{align*}
  b_0^\eps&=  (E_m(p_0) -  H\(p_0\) ) \U^\eps  ,\\
 b_1^\eps &= i \(p_0 -\nabla_k E_m\(p_0\)\)\cdot \nabla_z \U^\eps + 
\(\nabla_y\cdot\nabla_z\)\U^\eps,
\end{align*}
where $H(p_0)$ is the Bloch Hamiltonian \eqref{eq:bham} evaluated at
$k=p_0$ (again, this has to be understood modulo $\Gamma^*$). 
Introducing the following linear operators:
\begin{equation*}
 L_0 = E_m\(p_0\) -H \(p_0\),\ L_1 = i \(p_0 -\nabla_k
 E_m\(p_0\)\)\cdot \nabla_z +  
\nabla_y\cdot\nabla_z,\ L_2= i \d_t  +\frac{1}{2}\Delta_z  ,
\end{equation*}
and expanding $\U^\eps$ in powers of $\eps \in (0,1]$, we consequently
need to solve the following hierarchy of equations: 
\begin{equation}
\label{eq:systU}
\left \{
  \begin{aligned}
    &L_0 U_{0}=0,\\
&L_0 U_{1}+L_1U_{0}=0,\\
&L_0 U_{2}+L_1 U_{1}+L_2 U_{0}=F(U_0),
  \end{aligned}
\right.
\end{equation}
where, for $\alpha_c =1+d\sigma/2$, we find:
\begin{equation*}
  F(U_0)=\left\{
    \begin{aligned}
      0 & \quad{\rm if }\ \ \alpha>\alpha_c,\\
\l |U_0|^{2\si}U_0 & \quad {\rm if }\ \ \alpha=\alpha_c.
    \end{aligned}
\right.
\end{equation*}
In the next subsection, we shall focus on the resolution of \eqref{eq:systU}.

\subsection{The effective mass equation}
\label{sec:homo}

Given the form of $L_0$, the equation $L_0U_0=0$ implies
\begin{equation}\label{eq:U0}
  U_0(t,z,y)= u(t,z)\chi_m\(y,p_0\). 
\end{equation}
By Fredholm's alternative, a necessary and sufficient condition to
solve the equation $L_0U_1+L_1U_0=0$, is that $L_1U_0$ is orthogonal to
$\ker L_0$, that is:
\begin{equation}\label{eq:orth1}
  \<\chi_m,L_1 U_0\>_{L^2(Y)}=0. 
\end{equation}
Given the expression of $L_1$ and the formula \eqref{eq:U0}, we
compute
\begin{equation*}
  L_1U_0 = i \(p_0-\nabla_kE_m\(p_0\)\)\cdot \nabla_z u(t,z)
  \chi_m\(y,p_0\) + \nabla_y\chi_m\(y,p_0\)\cdot \nabla_z u(t,z). 
\end{equation*}
Now, we make use of the algebraic identities derived in Section~\ref{sec:appA}. 
In view of \eqref{eq:dkE}, we infer that \eqref{eq:orth1} is
automatically fulfilled. We thus obtain
\begin{equation*}
  U_1(t,z,y) = u_1(t,z)\chi_m\(y,p_0\) + u_1^\perp(t,z,y),
\end{equation*}
where $u_1^\perp$, the part of $U_1$ which is orthogonal to $\ker
L_0$, is obtained by inverting an elliptic operator: 
$  u_1^\perp = -L_0^{-1}L_1 U_0$.
Note that the formula for $L_1U_0$ can also be written as
\begin{equation*}
  L_1U_0 =-i  \nabla_k \(E_m\(p_0\)-H\(p_0\)\)
  \chi_m\(y,p_0\) \cdot \nabla_z u(t,z).
\end{equation*}
Taking into account \eqref{eq:dbloch}, this yields: $ u_1^\perp
(t,z,y) = -i \nabla_k \chi_m\(y,p_0\)\cdot \nabla_z 
  u(t,z).$ At this stage, we shall, for simplicity, choose $u_1=0$, in
  which case 
$U_1$ becomes 
simply a function of $u$:
\begin{equation}\label{eq:U1}
  U_1(t,z,y) = -i \nabla_k \chi_m\(y,p_0\)\cdot \nabla_z
  u(t,z). 
\end{equation}
As a next step in the formal analysis, we must solve
$  L_0 U_2+L_1 U_1+L_2 U_0=F(U_0)$.
By the same argument as before, we require
\begin{equation}\label{eq:orth2}
  \<\chi_m,L_1 U_1+L_2 U_0-F(U_0)\>_{L^2(Y)}=0. 
\end{equation}
With the expression \eqref{eq:U1}, we compute
\begin{equation*}
  L_1U_1 = \sum_{j,\ell=1}^d \( \(p_0-\nabla_k E_m\(p_0\)\)_j
  \d_{k_\ell}\chi_m\(y,p_0\) -i \d_{y_j
    k_\ell}^2\chi_m\(y,p_0\)\)\d_{z_j z_\ell}^2 u,
\end{equation*}
and we also have
\begin{equation*}
 L_2U_0 = \( i \d_t +\frac{1}{2}\Delta_z \) u(t,z) \, \chi_m\(y,p_0\) .
\end{equation*}
We consequently infer
  \begin{align*}
   \<\chi_m,L_1 U_1+L_2 U_0\>_{L^2(Y)} &=    i \d_t u +\frac{1}{2}\Delta_z u \\
 -\sum_{j,\ell=1}^d &\< \chi_m,\d_{k_j} E_m\(p_0\)
  \d_{k_\ell}\chi_m +i \d_{y_j
    k_\ell}^2\chi_m\>_{L^2(Y)}\d_{z_j z_\ell}^2 u.
  \end{align*}
 In the case $\alpha = \alpha_c$, by making the last sum symmetric
 with respect to $j$ and $\ell$, and 
using \eqref{eq:d2kE}, we finally obtain the nonlinear Schr\"odinger 
equation \eqref{eq:NLprofile} with effective mass tensor $M= \nabla_k^2
E_m(p_0)$ and coupling constant 
\begin{equation*}
  \lambda_m : = \l \<\chi_m, |\chi_m|^{2\sigma} \chi_m
  \>_{L^2(Y)}=\lambda \int_Y\left| \chi_m\(y,p_0\)\right|^{2\si+2}dy. 
\end{equation*}
In addition, we can write
\begin{equation}\label{eq:U2}
  U_2(t,z,y)= u_2(t,z)\chi_m\(y,p_0\) + u_2^\perp(t,z,y),
\end{equation}
where $ u_2^\perp = -L_0^{-1}\(L_1 U_1+L_2 U_0\).$
We shall also impose $u_2\equiv 0$ and thus $U_2 = u_2^\perp$. 

Assume for the moment that we can solve
\eqref{eq:NLprofile}. Then, we have the following result, which
establishes some  
basic regularity properties of our multi-scale expansion (where we
denote by $H^k$ the usual $L^2(\R^d)$ based Sobolev space). 

\begin{lemma}\label{lem:app} Suppose \eqref{ass:simple} holds true and
  let $u\in C([0,T]; H^k)$, be a solution of \eqref{eq:NLprofile} up
  to some $T>0$.  
Then $U_j \in C([0,T]; H^{k-j}_z\times W^{\infty, \infty}(Y))$, for $j=0,1,2$. 
\end{lemma}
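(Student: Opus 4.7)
The plan is to read off the regularity of each $U_j$ directly from the explicit formulas derived in Section~\ref{sec:homo}---namely \eqref{eq:U0}, \eqref{eq:U1}, and $U_2=-L_0^{-1}(L_1U_1+L_2U_0-F(U_0))$ from \eqref{eq:U2}---by combining them with three standard facts about the Bloch problem. First, elliptic regularity for \eqref{eq:bloch} on the fundamental cell $Y$, together with $V_{\rm per}\in C^\infty(\R^d)$, yields $\chi_m(\cdot,p_0)\in C^\infty(Y)$, hence in $W^{\infty,\infty}(Y)$. Second, the simplicity assumption \eqref{ass:simple} combined with Kato--Rellich analytic perturbation theory ensures that $k\mapsto\chi_m(\cdot,k)$ is smooth in a neighbourhood of $k=p_0$ with values in $C^\infty(Y)$, so that in particular $\nabla_k\chi_m(\cdot,p_0)\in W^{\infty,\infty}(Y)$. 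Third, the self-adjoint operator $L_0=E_m(p_0)-H(p_0)$ on $L^2(Y)$ has a one-dimensional kernel spanned by $\chi_m(\cdot,p_0)$; its restriction to the orthogonal complement is boundedly invertible, and by elliptic regularity on the torus $L_0^{-1}$ preserves $C^\infty(Y)$-smoothness.

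Given these facts, the cases $j=0,1$ follow at once. Indeed, $U_0(t,z,y)$ is the tensor product of $u(t,\cdot)\in C([0,T];H^k)$ with a smooth periodic function of $y$, and $U_1(t,z,y)$ is, up to the factor $-i$, the product of $\nabla_z u\in C([0,T];H^{k-1})$ with $\nabla_k\chi_m(\cdot,p_0)\in W^{\infty,\infty}(Y)$. Continuity in $t$ is inherited from $u$.

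The case $j=2$ requires more care. I would first check that $S:=L_1U_1+L_2U_0-F(U_0)$ is, pointwise in $(t,z)$, orthogonal to $\chi_m(\cdot,p_0)$ in $L^2(Y)$, so that $L_0^{-1}S$ is well defined; this is precisely the solvability condition \eqref{eq:orth2}, which holds because $u$ solves \eqref{eq:NLprofile}. Next I would estimate the regularity of $S$: the terms $L_1U_1$ and $\tfrac12\Delta_z u\cdot\chi_m(\cdot,p_0)$ in $L_2U_0$ involve at most two $z$-derivatives of $u$ tensored with smooth periodic functions of $y$, and therefore lie in $C([0,T];H^{k-2}_z\times W^{\infty,\infty}(Y))$. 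The time-derivative contribution $i\partial_t u\cdot\chi_m(\cdot,p_0)$ from $L_2U_0$ is handled by substituting \eqref{eq:NLprofile}, which expresses $\partial_t u$ as a linear combination of second $z$-derivatives of $u$ and, in the critical case, of $|u|^{2\sigma}u$. For $k$ large enough that $H^k(\R^d)$ is an algebra (say $k>d/2$), this nonlinear contribution stays in $H^k\subset H^{k-2}$, and the same argument controls the source term $F(U_0)=\lambda|u|^{2\sigma}u\cdot|\chi_m|^{2\sigma}\chi_m$. Finally, $L_0^{-1}$ acts only in the $y$-variable, so it commutes with $\partial_t$ and with $\partial_z$, preserves the $z$-regularity, and propagates $C^\infty$-smoothness in $y$, yielding $U_2\in C([0,T];H^{k-2}_z\times W^{\infty,\infty}(Y))$.

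The main obstacle is really just the bookkeeping required to separate the $z$- and $y$-regularities and to handle the nonlinear contribution to $\partial_t u$; the substantive ingredient is the fact that $L_0^{-1}$ is a smoothing operator of order two in the $y$-variable alone on $\{\chi_m(\cdot,p_0)\}^\perp$, which is nothing but elliptic regularity on the torus for the Bloch Hamiltonian.
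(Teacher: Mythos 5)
Your argument is correct and is essentially the paper's own proof, which simply notes that $(y,k)\mapsto\chi_m(y,k)$ is smooth with bounded derivatives under \eqref{ass:simple} and then reads the regularity off the explicit construction of $U_0,U_1,U_2$ from \eqref{eq:systU}; you have merely filled in the details (perturbation theory for $k\mapsto\chi_m$, invertibility of $L_0$ on $\{\chi_m\}^\perp$, the algebra property of $H^k$ for the nonlinear term). The only simplification you missed is that $L_2U_0$ is proportional to $\chi_m(\cdot,p_0)$ in $y$, so its projection onto $\{\chi_m\}^\perp$ vanishes and the substitution of \eqref{eq:NLprofile} for $\partial_t u$ is not actually needed, though your route reaches the same conclusion.
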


\begin{proof} First note that $(y,k)\mapsto \chi_m(y,k)$ is 
smooth and bounded together with all its derivatives, provided
\eqref{ass:simple} holds true.  
Having this in mind, the proof follows directly from the construction
of $\{U_j\}_{j=0,1,2}$ as solutions to the system  
\eqref{eq:systU}.
\end{proof}

\begin{remark} 
Note that in the case $\alpha > \alpha_c$ nonlinear effects are
\emph{absent at leading order} since we obtain, instead of
\eqref{eq:NLprofile}, a {\it linear} effective mass equation: 
\begin{equation}\label{eq:lineq}
 i \d_t u_{\rm lin} + \frac{1}{2}{\rm div}_z\(\nabla^2_k E_m\(p_0\)\cdot
  \nabla_z\) u_{\rm lin} = 0, \quad u_{\mid t=0} =  u_0 .
\end{equation}
This type of equation has been derived in \cite{AlPi05,Sp}, using a different 
asymptotic scaling.
\end{remark}

\section{Main results}
\label{sec:main}

In this section we shall make the computations given above rigorous
and prove a nonlinear stability result.  
As a first step we need to guarantee the existence of a smooth
solution to \eqref{eq:NLprofile}, at least locally in-time.

\subsection{Construction of an approximate solution}\label{sec:approx} 

The dispersion relation of \eqref{eq:NLprofile} is given by a
real-valued symmetric matrix. Standard techniques (see, e.g.,
\cite{TaoDisp}) yield the 
existence of a unique local solution, provided that the initial datum
is sufficiently smooth:
\begin{lemma}\label{lem:exist}
  Let $u_0\in H^k$ with $k>d/2$. There exists $T_c\in (0,+\infty]$
  and a unique maximal solution $u\in C([0,T_c);H^k)$ to
  \eqref{eq:NLprofile}, such that $\| u(t,\cdot) \|_{L^2} = \| u_0
  \|_{L^2}$. The solution is maximal  
  in the sense that if $T_c<\infty$, then
  \begin{equation*}
    \lim_{t\to T_c}\|u(t,\cdot )\|_{H^k}=+\infty. 
  \end{equation*}
\end{lemma}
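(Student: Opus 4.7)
The plan is to follow the classical local well-posedness theory for nonlinear Schr\"odinger equations in $H^k$ with $k>d/2$ (see, e.g., \cite{TaoDisp}); the only departure from the standard scalar case is that the Laplacian is replaced by the constant-coefficient, second-order operator
\begin{equation*}
\mathcal{A}:=\tfrac{1}{2}\,\mathrm{div}_z\bigl((\nabla^2_k E_m(p_0))\cdot\nabla_z\bigr),
\end{equation*}
whose symbol is the real quadratic form $-\tfrac{1}{2}\xi\cdot M\xi$ with $M=\nabla^2_k E_m(p_0)\in\R^{d\times d}$ symmetric. The key observation is that $i\mathcal{A}$ is skew-adjoint on $L^2$, hence generates a strongly continuous \emph{unitary} group $U(t)=e^{it\mathcal{A}}$ that acts as a Fourier multiplier with unimodular symbol $e^{-it\xi\cdot M\xi/2}$; consequently $\|U(t)f\|_{H^s}=\|f\|_{H^s}$ for every $s\in\R$ and every $f\in H^s$, irrespective of whether $M$ is positive/negative definite, indefinite, or even degenerate.

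Next, I would recast \eqref{eq:NLprofile} as the Duhamel integral equation
\begin{equation*}
u(t)=U(t)u_0-i\lambda_m\int_0^t U(t-s)\bigl(|u(s)|^{2\sigma}u(s)\bigr)\,ds,
\end{equation*}
and run a Picard iteration in the ball $B_R=\{v\in C([0,T];H^k)\colon \|v\|_{L^\infty_T H^k}\le R\}$ with $R=2\|u_0\|_{H^k}$. Because $k>d/2$, $H^k(\R^d)$ is a Banach algebra, and because $\sigma\in\N$, the nonlinearity $v\mapsto|v|^{2\sigma}v$ is a polynomial of degree $2\sigma+1$ in $(v,\bar v)$; the resulting tame bounds of the form $\||v|^{2\sigma}v\|_{H^k}\lesssim\|v\|_{H^k}^{2\sigma+1}$ and the analogous Lipschitz estimate on bounded sets, combined with the unitarity of $U(t)$ on $H^k$, make the Duhamel map a strict contraction on $B_R$ for $T$ small enough in terms of $\|u_0\|_{H^k}$, and yield a unique local-in-time solution $u\in C([0,T];H^k)$.

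Finally, I would prove conservation of mass by pairing \eqref{eq:NLprofile} with $\bar u$ in $L^2$ and taking imaginary parts: the contribution of $\mathcal{A}$ vanishes by self-adjointness and that of the nonlinear term by reality of $|u|^{2\sigma+2}$. Since $u$ is only an $H^k$-valued strong solution, this identity is justified by a short regularisation argument (or by approximating $u_0$ by Schwartz data and using continuous dependence from the fixed-point scheme). The existence of a maximal $T_c\in(0,+\infty]$ together with the blow-up alternative then follows from the standard continuation procedure, since the local existence time above depends only on $\|u_0\|_{H^k}$. I expect the only mildly delicate point to be flagging that the possibly degenerate or indefinite nature of the effective mass $M$ is irrelevant at this level of regularity\,---\,the argument uses solely unitarity of $U(t)$ on $H^k$ and the Banach algebra property, and no dispersive or Strichartz-type estimate enters.
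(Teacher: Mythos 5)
Your proposal is correct and is precisely the ``standard techniques'' argument that the paper invokes by reference to \cite{TaoDisp} without writing it out: a fixed-point/Duhamel argument in $H^k$, $k>d/2$, using only the unitarity of the constant-coefficient propagator on every $H^s$ (valid for any real symmetric, possibly indefinite or degenerate, effective mass matrix) and the Banach algebra property, followed by mass conservation and the usual continuation/blow-up alternative. Nothing further is needed.
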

The solution $u(t,\cdot)$ may not exist for all times, even if
$\lambda \ge 0$, i.e. even if the nonlinearity in the original
equation \eqref{eq:GPE} is defocusing.  
However, we can claim $T_c=\infty$ in either of the
following cases (see e.g. \cite{CazCourant}):
\begin{itemize}
\item $\nabla_k^2E_m(p_0)$ is positive definite and $\lambda_m\ge 0$,
  or
\item $\nabla_k^2E_m(p_0)$ is negative definite and $\lambda_m\le 0$.
\end{itemize}
On the contrary, if for instance $\nabla_k^2E_m(p_0)$ is positive
definite and $\lambda_m< 0$ (focusing nonlinearity), finite time blow
up (that is, $T_c<\infty$) may occur, see, e.g., \cite{TaoDisp,CazCourant}. This
is the case typically if the initial datum is ``too large'': for any
fixed profile $u_0\in {\mathcal S}(\R^d)$, if one considers $u_{\mid
  t=0} = \Lambda u_0$, there exists $\Lambda_0>0$ such that for all 
$\Lambda\ge \Lambda_0$, $T_c<\infty$. 
Note that in other situations, where the signature of
$\nabla_k^2E_m(p_0)$ is non-trivial (hence $d\ge 2$), the 
issue of global existence \emph{vs.} finite time blow-up is an open question. 

\begin{remark} Clearly, for $\alpha > \alpha_c$ these issues do not
  occur, since the leading order profile $u_{\rm lin} $  
solves the linear equation \eqref{eq:lineq} and hence exists for all
times, $T_c = +\infty$. 
\end{remark}

Lemma \ref{lem:exist} provides the existence of a local-in-time
solution $u$ of the effective mass equation. In view of the
multi-scale expansion  
given in Section~\ref{sec:multi}, we can thus define an approximate solution by
\begin{equation}
  \label{eq:app}
   \psi_{\rm app}^\eps(t,x):= \eps^{-d/4}
  \(U_0 + \sqrt{\eps} U_1 + \eps U_2 \)
  \(t,\frac{x-q(t)}{\sqrt\eps},\frac{x}{\eps}\) \e^{i \phi_m(t,x) / 
    \eps},
\end{equation}
which satisfies the original equation \eqref{eq:GPE} up to some remainder terms:
\begin{align*}
   \(i\eps \d_t
   +\frac{\eps^2}{2}\Delta  -
  V_{\rm per} 
   \)&\psi_{\rm app}^\eps-\l \eps^\alpha |\psi_{\rm
    app}^\eps|^{2\si}\psi_{\rm app}^\eps= \\
 &\frac{\e^{i\phi_m/\eps}}{\eps^{d/4}}
 \eps^{3/2}\(r_1^\eps+r_2^\eps\) (t,z,y)\Big|_{(z,y)= 
    \(\frac{x-q(t)}{\sqrt\eps},\frac{x}{\eps}\) }-\l R^\eps(t,x).
  \end{align*}
The remainder terms are given by 
\begin{equation*}
 r_1^\eps(t,z,y)=L_2U_1(t,z,y),\\ r_2^\eps(t,z,y)=
 L_1U_2(t,z,y)+\sqrt{\eps} L_2U_2(t,z,y), 
\end{equation*}
and $R^\eps = \eps^\alpha |\psi_{\rm
    app}^\eps|^{2\si}\psi_{\rm app}^\eps$ if $\alpha>\alpha_c$, while
  if $\alpha=\alpha_c=1+d\si/2$,
\begin{equation*}
 R^\eps=
 \eps^{1+d\si/2} |\psi_{\rm
    app}^\eps|^{2\si}\psi_{\rm app}^\eps-\frac{\eps}{\eps^{d/4}} \e^{i\phi_m/\eps}
  \left|U_0\(t,\frac{x-q(t)}{\sqrt\eps},\frac{x}{\eps}\)
\right|^{2\si}U_0\(t,\frac{x-q(t)}{\sqrt\eps},\frac{x}{\eps}\). 
\end{equation*}
This, together with the regularity result established in Lemma
\ref{lem:app} then directly yields the following proposition.

\begin{proposition}\label{prop:app} 
Assume \eqref{ass:simple} and let $\alpha \ge
\alpha_c=1+d\si/2$. Then, we can find $\psi_{\rm app}^\eps$ such 
that:\\
$1.$ For all $T\in [0, T_c)$,
$\psi_{\rm app}^\eps$ has a coherent state 
structure on $[0,T]$, in the sense that there exists $C$ independent
of $\eps\in (0,1]$ such that, with  $\varphi^\eps$ defined in
\eqref{eq:semibl},  
\begin{align*}
\text{ for } \alpha=\alpha_c, &\quad  \sup_{t\in
  [0,T]}\|\psi_{\rm app}^\eps(t, \cdot)-\varphi^\eps(t,
\cdot)\|_{L^2(\R^d)}\le C\sqrt\eps, \\ 
 \text{ for } \alpha>\alpha_c,&\quad  \sup_{t\in
  [0,T]}\|\psi_{\rm app}^\eps(t, \cdot)-\varphi_{\rm 
    lin}^\eps(t, \cdot)\|_{L^2(\R^d)}\le C\sqrt\eps,
\end{align*}
where $\varphi_{\rm
  lin}^\eps$ is the approximate solution constructed from $u_{\rm lin}$,
solving \eqref{eq:lineq}.\\ 
$2.$ The function solves \eqref{eq:GPE} up to a small error:
\begin{equation*}
     i\eps\d_t \psi_{\rm app}^\eps +\frac{\eps^2}{2}\Delta
\psi_{\rm app}^\eps=V_{\rm per} 
     \left(\frac{x}{\eps}\right)\psi_{\rm app}^\eps+
     \lambda\eps^\alpha|\psi_{\rm app}^\eps|^{2\sigma}\psi_{\rm
       app}^\eps+\eps w^\eps,  
\end{equation*}
where the remainder term $w^\eps$ satisfies: for all $T>0$, with
$T<T_c$ in the case $\alpha=\alpha_c$, there
exists $C>0$ independent of $\eps>0$ such that
\begin{equation*}
  \sup_{t\in [0,T]}\|w^\eps(t, \cdot)\|_{L^2(\R^d)}\le C
\left\{
\begin{aligned}
\eps^{\min(\alpha-\alpha_c,1/2)}&\quad \text{ if } \alpha>\alpha_c,\\
\sqrt\eps &\quad \text{ if } \alpha=\alpha_c.
\end{aligned}
\right.
\end{equation*}
\end{proposition}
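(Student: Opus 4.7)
The plan is to read off both assertions from the explicit decomposition of $\psi_{\rm app}^\eps$ into its $U_0,U_1,U_2$ pieces, together with the residual identity already written just before the proposition, combined with Lemma \ref{lem:app} and one simple $L^2$ scaling estimate for two-scale functions of the form $f((x-q(t))/\sqrt\eps,x/\eps)$. The workhorse is the observation that the change of variables $z=(x-q(t))/\sqrt\eps$ turns the prefactor $\eps^{-d/4}$ into a unitary on $L^2(\R^d_x)$, so that for any smooth $\Gamma$-periodic $y$-dependence,
$$\left\|\eps^{-d/4} f\(t,\tfrac{x-q(t)}{\sqrt\eps},\tfrac{x}{\eps}\)\right\|_{L^2(\R^d_x)}\le \|f(t,\cdot,\cdot)\|_{L^2_zL^\infty_y}.$$

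For part~1, since by construction $U_0(t,z,y)=u(t,z)\chi_m(y,p_0)$, one has
$$\psi_{\rm app}^\eps - \varphi^\eps = \eps^{-d/4}\e^{i\phi_m/\eps}\(\sqrt\eps\,U_1+\eps U_2\)\(t,\tfrac{x-q(t)}{\sqrt\eps},\tfrac{x}{\eps}\).$$
Lemma~\ref{lem:app} provides $U_j\in C([0,T];H^{k-j}_z\times W^{\infty,\infty}(Y))$, and the scaling inequality above then yields $\|\psi_{\rm app}^\eps-\varphi^\eps\|_{L^2}\le C\sqrt\eps$ on $[0,T]$; the supercritical case is identical with $u$ replaced by $u_{\rm lin}$ solving \eqref{eq:lineq}.

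For part~2, I would divide the residual identity by $\eps$, writing
$$w^\eps = \eps^{1/2-d/4}\e^{i\phi_m/\eps}(r_1^\eps+r_2^\eps)\(t,\tfrac{x-q(t)}{\sqrt\eps},\tfrac{x}{\eps}\) - \frac{\l}{\eps}R^\eps.$$
The expressions $r_1^\eps=L_2U_1$ and $r_2^\eps=L_1U_2+\sqrt\eps\,L_2U_2$ are finite combinations of $z$-derivatives of $u$ with smooth $\Gamma$-periodic factors in $y$, so Lemma~\ref{lem:app} and the scaling inequality bound the first piece by $C\sqrt\eps$. For the nonlinear residual I would split into two cases. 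If $\alpha>\alpha_c$, a direct $L^{4\si+2}$ scaling computation gives $\|\psi_{\rm app}^\eps\|_{L^{4\si+2}}^{2\si+1}\le C\eps^{-d\si/2}$, hence $\|R^\eps/\eps\|_{L^2}\le C\eps^{\alpha-1-d\si/2}=C\eps^{\alpha-\alpha_c}$, which combined with the $\sqrt\eps$ linear contribution yields the announced $\eps^{\min(\alpha-\alpha_c,1/2)}$. If $\alpha=\alpha_c$, a short manipulation using $\alpha_c=1+d\si/2$ produces the clean identity
$$\frac{R^\eps}{\eps}=\eps^{-d/4}\e^{i\phi_m/\eps}\bigl(|W^\eps|^{2\si}W^\eps-|U_0|^{2\si}U_0\bigr),\qquad W^\eps := U_0+\sqrt\eps\,U_1+\eps U_2,$$
and the elementary inequality $\bigl||a|^{2\si}a-|b|^{2\si}b\bigr|\le C(|a|^{2\si}+|b|^{2\si})|a-b|$, combined with $L^\infty$ control of $u$ from $H^k$ with $k>d/2$, gives $\|R^\eps/\eps\|_{L^2}\le C\sqrt\eps$.

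The one genuinely delicate step is the critical-case nonlinear residual: one must verify that the $\eps^{-d/4}$ concentration of the wave packet is \emph{exactly} compensated by the $\eps^{1+d\si/2}$ factor in front of the nonlinearity, so that $|W^\eps|^{2\si}W^\eps-|U_0|^{2\si}U_0$ alone governs the remainder and inherits the $O(\sqrt\eps)$ gain from $W^\eps-U_0$. This is precisely the balance that pins down $\alpha_c$; once it is verified, everything else is a routine consequence of Lemma~\ref{lem:app} and the single scaling estimate above.
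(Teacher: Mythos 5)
Your proposal is correct and follows essentially the same route as the paper, which states that the residual identity together with Lemma~\ref{lem:app} ``directly yields'' the proposition and omits the details; your computations are precisely those omitted details, namely the $L^2$-unitarity of the $\eps^{-d/4}$ two-scale rescaling, the bound $\|r_1^\eps+r_2^\eps\|_{L^2_zL^\infty_y}\le C$ from Lemma~\ref{lem:app}, the homogeneity count $\|\psi_{\rm app}^\eps\|_{L^{4\si+2}}^{2\si+1}\le C\eps^{-d\si/2}$ in the supercritical case, and the exact cancellation $\eps^{d\si/2}\cdot\eps^{-d(2\si+1)/4}=\eps^{-d/4}$ that reduces the critical nonlinear residual to $|W^\eps|^{2\si}W^\eps-|U_0|^{2\si}U_0$.
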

Note that because of the factor $\eps$ in front of the time
derivative, it is natural to represent a small error term as $\eps$
times as small term. 

\subsection{Nonlinear stability} \label{sec:stab}

It remains to prove nonlinear stability of the approximate solution
constructed above. For the sake of simplicity we shall do so only for
$\alpha=\alpha_c$ and  
$d=1$. The (physically less interesting) case $\alpha>\alpha_c$ can be
proved analogously and a possible generalization to higher dimensions
is indicated in Remark~\ref{rem:multi} below.  
For $\eps_0>0$, set
\begin{equation*}
 \|f^\eps\|_{H^1_\eps}:=\sup_{0<\eps\le 
      \eps_0}\big(\|f^\eps\|_{L^2}+\|\eps\d_x f^\eps \|_{L^2}\big),
      \end{equation*}
which is equivalent to the usual $H^1$-norm for every (fixed) $\eps>0$.      
The approach that we present is similar to the one followed in
\cite{CaMaSp04}:
First, we need to construct a more accurate approximate
  solution than the one stated in Proposition~\ref{prop:app}. Taking
  the asymptotic expansion presented in Section~\ref{sec:multi} one
  step further, we can 
  gain a factor $\sqrt\eps$ in Proposition~\ref{prop:app}. More
  precisely, we can  
  construct $\widetilde \psi^\eps_{\rm app}$ such that:
  \begin{equation}\label{eq:psiapp2}
    \sup_{t\in [0,T]}\left\| \psi^\eps_{\rm app}(t,\cdot)-\widetilde
      \psi^\eps_{\rm app}(t,\cdot)\right\|_{H^1_\eps} \le C\sqrt\eps,
  \end{equation}
 and 
  \begin{equation*}
        i\eps\d_t \widetilde\psi_{\rm app}^\eps +\frac{\eps^2}{2}\Delta
\widetilde\psi_{\rm app}^\eps=V_{\rm per} 
     \left(\frac{x}{\eps}\right)\widetilde\psi_{\rm app}^\eps+
     \lambda\eps^{1+\si/2}|\widetilde\psi_{\rm
       app}^\eps|^{2\sigma}\widetilde\psi_{\rm 
       app}^\eps+\eps \widetilde w^\eps,  
  \end{equation*}
where the additional factor $\sqrt\eps$ is reflected in the error estimate
\begin{equation}\label{eq:wtilde}
  \sup_{t\in [0,T]}\left\|\widetilde w^\eps(t,\cdot)\right\|_{H^1_\eps}\le C
  \eps. 
\end{equation}
Note that in this case the corrector $U_1$ is not the same for $\widetilde
\psi^\eps_{\rm app}$, since unlike what we have done in
\S\ref{sec:homo}, we can no longer assume $u_1=0$. Rather, $u_1$ now solves an
evolution equation, which is essentially \eqref{eq:NLprofile} linearized about
$u$, with a non-trivial source term (see \cite{CaMaSp04} for more
details). Therefore, the estimate 
\eqref{eq:psiapp2} must be expected to be sharp in general. 

Having constructed such an improved approximation $\widetilde
\psi^\eps_{\rm app}$ we can state the following nonlinear stability result:
\begin{theorem} \label{thm:stab} 
Let $d=1$, $\alpha =1+\si/2$, $\si\in \N$, and
Assumption~\eqref{ass:simple} hold. In  
addition, suppose that the initial data satisfy:
\begin{equation}\label{eq:wp}
 \left\|\, \psi_0^\eps - \widetilde\psi^\eps_{{\rm app}\mid t=0}\right\|_{L^2(\R)}
  =\O(\eps),  \quad \left\|\, \eps\d_x\(\psi_0^\eps - \widetilde\psi^\eps_{{\rm
        app}\mid t=0}\)\right\|_{L^2(\R)} =\O(1). 
\end{equation}
Let $T\in [0,T_c)$. Then, there exists $\eps_0=\eps_0(T)$ such that for
$\eps\in (0,\eps_0]$, the solution of \eqref{eq:GPE} exists on $[0,T]$.
Moreover,
there exists $C$ independent of $\eps\in (0,\eps_0]$ such that
\[
\sup_{t\in [0,T]} \| \psi^\eps(t,\cdot) - \varphi^\eps
(t,\cdot) \|_{L^2(\R)} \le C\sqrt{\eps}. 
\]
where $\varphi^\eps$ is defined in \eqref{eq:semibl}.
\end{theorem}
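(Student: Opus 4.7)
My plan follows the semiclassical stability strategy of \cite{CaMaSp04}. I set $r^\eps := \psi^\eps - \widetilde\psi^\eps_{\rm app}$, where $\widetilde\psi^\eps_{\rm app}$ is the refined approximate solution of \eqref{eq:psiapp2}--\eqref{eq:wtilde}. Subtracting the two equations, $r^\eps$ satisfies
\begin{equation*}
 i\eps\d_t r^\eps + \frac{\eps^2}{2}\d_x^2 r^\eps - V_{\rm per}\!\left(\tfrac{x}{\eps}\right)r^\eps = \lambda \eps^{1+\si/2}\bigl(|\psi^\eps|^{2\si}\psi^\eps - |\widetilde\psi^\eps_{\rm app}|^{2\si}\widetilde\psi^\eps_{\rm app}\bigr) - \eps\,\widetilde w^\eps,
\end{equation*}
with $\|r^\eps(0)\|_{L^2}=\O(\eps)$ and $\|\eps\d_x r^\eps(0)\|_{L^2}=\O(1)$ by \eqref{eq:wp}, and with $\|\widetilde w^\eps\|_{H^1_\eps}=\O(\eps)$ by \eqref{eq:wtilde}. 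Once I establish $\|r^\eps\|_{L^2}=\O(\eps)$ on $[0,T]$, the conclusion follows by the triangle inequality, since $\|\widetilde\psi^\eps_{\rm app}-\varphi^\eps\|_{L^2}=\O(\sqrt\eps)$ is already ensured by \eqref{eq:psiapp2} together with Proposition~\ref{prop:app}.

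The key quantitative observation is that the coherent-state concentration at scale $\sqrt\eps$ in $d=1$ yields $\|\widetilde\psi^\eps_{\rm app}\|_{L^\infty}\le C\eps^{-1/4}$, so that the critical balance
\begin{equation*}
 \eps^{1+\si/2}\|\widetilde\psi^\eps_{\rm app}\|_{L^\infty}^{2\si}=\O(\eps)
\end{equation*}
exactly matches the $\eps$ prefactor of $\d_t$ and makes a Gronwall argument closable. The $L^\infty$ norm of $\psi^\eps$ itself will be controlled via the triangle inequality and the semiclassical Gagliardo--Nirenberg inequality $\|r^\eps\|_{L^\infty}^2 \le 2\eps^{-1}\|r^\eps\|_{L^2}\|\eps\d_x r^\eps\|_{L^2}$, which is the one-dimensional feature dictating the dimensional restriction.

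The two energy estimates are routine. Pairing the equation with $r^\eps$ in $L^2$ and taking imaginary parts kills the self-adjoint linear operator $-\tfrac{\eps^2}{2}\d_x^2+V_{\rm per}(\cdot/\eps)$; using $\bigl||\psi^\eps|^{2\si}\psi^\eps-|\widetilde\psi^\eps_{\rm app}|^{2\si}\widetilde\psi^\eps_{\rm app}\bigr|\le C(|\psi^\eps|^{2\si}+|\widetilde\psi^\eps_{\rm app}|^{2\si})|r^\eps|$ and dividing by $\eps$ yields
\begin{equation*}
 \tfrac{d}{dt}\|r^\eps\|_{L^2}^2 \le C\eps^{\si/2}\bigl(\|\psi^\eps\|_{L^\infty}^{2\si}+\|\widetilde\psi^\eps_{\rm app}\|_{L^\infty}^{2\si}\bigr)\|r^\eps\|_{L^2}^2 + 2\|\widetilde w^\eps\|_{L^2}\|r^\eps\|_{L^2}.
\end{equation*}
Commuting $\eps\d_x$ with the equation brings in the bounded commutator $[\eps\d_x,V_{\rm per}(\cdot/\eps)]=V_{\rm per}'(\cdot/\eps)$ and a derivative of the nonlinearity, which can be estimated analogously using $\|\eps\d_x\widetilde\psi^\eps_{\rm app}\|_{L^\infty}=\O(\eps^{-1/4})$; this produces a parallel inequality for $\|\eps\d_x r^\eps\|_{L^2}^2$.

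The main obstacle is the bootstrap used to close these coupled estimates, and the step that forces $d=1$. By local well-posedness, $\psi^\eps$ exists on some maximal interval $[0,T_\eps)$; I would define $T^\star:=\sup\{\tau\in[0,T\wedge T_\eps): \|r^\eps\|_{L^\infty([0,\tau];H^1_\eps)}\le M\}$ for a constant $M$ large compared to $\|r^\eps(0)\|_{H^1_\eps}$. On $[0,T^\star]$, Gagliardo--Nirenberg bounds $\|r^\eps\|_{L^\infty}$ uniformly, hence $\eps^{\si/2}\|\psi^\eps\|_{L^\infty}^{2\si}\le C$ uniformly in $\eps$; the two Gronwall inequalities then return $\|r^\eps\|_{L^2}\le C\eps$ and $\|r^\eps\|_{H^1_\eps}\le C_T$ with $\eps$-independent constants depending only on $T\in[0,T_c)$. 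For $\eps$ small enough relative to $C_T$ this is a strict improvement of the bootstrap assumption, proving $T^\star=T$, ruling out finite-time blow-up of $\psi^\eps$ before $T$, and giving the claimed bound $\|\psi^\eps-\varphi^\eps\|_{L^2}\le C\sqrt\eps$.
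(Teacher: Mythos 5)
Your overall strategy coincides with the paper's: the same error decomposition $r^\eps=\psi^\eps-\widetilde\psi^\eps_{\rm app}$ against the refined approximation, the same coupled $L^2$ and $\eps\d_x$ energy estimates with the bounded commutator $[\eps\d_x,V_{\rm per}(\cdot/\eps)]=\d_yV_{\rm per}(\cdot/\eps)$, the same semiclassical Gagliardo--Nirenberg inequality, and the same critical balance $\eps^{1+\si/2}\|\widetilde\psi^\eps_{\rm app}\|_{L^\infty}^{2\si}=\O(\eps)$, which the paper packages through Moser's lemma applied to $v^\eps=\eps^{1/4}\widetilde\psi^\eps_{\rm app}$ and $\delta^\eps=\eps^{1/4}\eta^\eps$, using the homogeneity of $F$. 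However, the bootstrap --- which you rightly single out as the main obstacle --- does not close as you have set it up. If the continuity hypothesis is only $\|r^\eps\|_{L^\infty([0,\tau];H^1_\eps)}\le M$ with $M=\O(1)$ (and $M$ cannot be taken smaller, since $\|\eps\d_x r^\eps(0)\|_{L^2}=\O(1)$ by \eqref{eq:wp}), then your own inequality $\|r^\eps\|_{L^\infty}^2\le 2\eps^{-1}\|r^\eps\|_{L^2}\|\eps\d_x r^\eps\|_{L^2}$ yields only $\|r^\eps\|_{L^\infty}\le\sqrt{2}\,\eps^{-1/2}M$, not a uniform bound; consequently $\eps^{\si/2}\|\psi^\eps\|_{L^\infty}^{2\si}$ is merely $\O(\eps^{-\si/2})$, and the Gronwall factor degenerates as $\eps\to 0$. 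The assertion that ``on $[0,T^\star]$, Gagliardo--Nirenberg bounds $\|r^\eps\|_{L^\infty}$ uniformly'' is where the argument breaks.

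The repair is to bootstrap on a quantity that remembers the disparity between the two components of the $H^1_\eps$ norm: either on $\eps^{1/4}\|r^\eps(t)\|_{L^\infty}\le R$ directly (this is the paper's choice, and exactly the hypothesis under which its Moser lemma applies to $\delta^\eps=\eps^{1/4}\eta^\eps$), or on the pair $\|r^\eps\|_{L^2}\le M\eps$, $\|\eps\d_x r^\eps\|_{L^2}\le M$. Under either hypothesis $\eps^{\si/2}\|\psi^\eps\|_{L^\infty}^{2\si}$ is bounded uniformly in $\eps$, the two Gronwall estimates return $\|r^\eps\|_{L^2}\le C(T)\eps$ and $\|\eps\d_x r^\eps\|_{L^2}\le C(T)$, and Gagliardo--Nirenberg then gives $\eps^{1/4}\|r^\eps\|_{L^\infty}\le C(T)\eps^{1/4}$, a strict improvement for $\eps\le\eps_0(T)$. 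Everything else in your outline --- the reduction to $\varphi^\eps$ via \eqref{eq:psiapp2} and Proposition~\ref{prop:app}, the treatment of the nonlinear difference, the role of $d=1$ in the Gagliardo--Nirenberg exponent --- matches the paper's proof.
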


\begin{proof} The scheme of the proof is the same as the proof of
  Theorem~4.5 in \cite{CaMaSp04}, so we shall only give the main
  steps. 
Fix $T<T_c$ and let $\eta^\eps=\psi^\eps -\widetilde \psi^\eps_{\rm
  app}$ be the error 
  between the exact and the approximate solution. It
  satisfies 
\begin{equation*}
       i\eps\d_t \eta^\eps +\frac{\eps^2}{2}\Delta
\eta^\eps=V_{\rm{per}} 
     \left(\frac{x}{\eps}\right)\eta^\eps+ 
     \lambda\eps^{1+\si/2}\(|\psi^\eps|^{2\si}\psi^\eps- |\widetilde\psi_{\rm
       app}^\eps|^{2\sigma}\widetilde\psi_{\rm 
       app}^\eps\)-\eps \widetilde w^\eps, 
\end{equation*}
with $\|\eta^\eps_{\mid t=0}\|_{L^2}=\O(\eps)$,
$\|\eps\d_x\eta^\eps_{\mid t=0}\|_{L^2}=\O(1)$ by assumption.  
From \cite{RauchUtah}, we have:
\begin{lemma}[Moser's lemma]\label{lem:moser}
Let $R>0$, $s\in \N$ and $F(z)=|z|^{2\si}z$, $\si\in \N$. Then there exists
$C=C(R,s,\si)$ such that if $v^\eps$ satisfies 
\begin{equation*}
\left\| \(\eps\d_x\)^\beta v^\eps\right\|_{L^\infty(\R)} \le R,
\quad 0\le \beta\le s ,
\end{equation*}
and $\delta^\eps$ satisfies $\displaystyle \left\| \delta^\eps
\right\|_{L^\infty(\R)} \le 
R$, then 
\begin{equation*}
 \sum_{0\le \beta\le s}\left\|(\eps\d_x)^\beta   \(F(v^\eps+\delta^\eps)
- F(v^\eps)\)\right\|_{L^2} 
\le C \sum_{0\le \beta\le s}\left\| \(\eps\d_x\)^\beta\delta^\eps\right\|_{L^2} . 
\end{equation*}
\end{lemma}
We apply this lemma with $v^\eps=\eps^{1/4}\widetilde\psi^\eps_{\rm
  app}$, and $s=0$, $s=1$ successively: there exists 
$R>0$ independent of $\eps\in (0,1]$ such that  
\begin{equation*}
\sup_{t\in [0,T]}  \sum_{\beta=0,1} \left\| \(\eps\d_x\)^\beta
   v^\eps(t)\right\|_{L^\infty(\R)} \le R. 
\end{equation*}
Set $\delta^\eps = \eps^{1/4}\eta^\eps$. By assumption and the 
Gagliardo--Nirenberg inequality,
\begin{equation}\label{eq:GN}
  \|\delta^\eps_{\mid t=0}\|_{L^\infty} =\eps^{1/4}\|\eta^\eps_{\mid
    t=0}\|_{L^\infty} \le \eps^{1/4}\sqrt 2\eps^{-1/2}\|\eta^\eps_{\mid
    t=0}\|_{L^2}^{1/2}\|\eps\d_x\eta^\eps_{\mid
    t=0}\|_{L^2}^{1/2}\le C\eps^{1/4}. 
\end{equation}
As long as $\|\delta^\eps(t)\|_{L^\infty}\le R$, energy estimates and
Moser's lemma with $s=0$ yield
\begin{equation*}
  \|\eta^\eps(t)\|_{L^2}\le \|\eta^\eps(0)\|_{L^2} + C \int_0^t
  \|\eta^\eps(s)\|_{L^2}ds + \int_0^t\|\widetilde w^\eps(s)\|_{L^2}ds,   
\end{equation*}
where we have used the homogeneity of
$F$. By Gronwall's Lemma, for $t\le T$:
\begin{equation*}
  \|\eta^\eps(t)\|_{L^2}\le C(T) \(\|\eta^\eps(0)\|_{L^2} + \int_0^t
  \|\widetilde w^\eps(s)\|_{L^2}ds\)\le C\eps. 
\end{equation*}
Applying the operator $\eps\d_x$ to the equation satisfied by
$\eta^\eps$, we infer similarly
\begin{align*}
 \|\eps\d_x\eta^\eps(t)\|_{L^2}\le & \,
   \|\eps\d_x\eta^\eps(0)\|_{L^2} + C \int_0^t 
  \|\eta^\eps(s)\|_{H^1_\eps}ds +
  \int_0^t\|\eps\d_x\widetilde w^\eps(s)\|_{L^2}ds\\
& \, +
  \frac{1}{\eps}\|\partial_y V_{\rm{per}}\|_{L^
    \infty}\int_0^t\|\eta^\eps(s)\|_{L^2}ds  ,  
\end{align*}
where the last term stems from the relation
$[\eps\d_x,V_{\rm{per}}(x/\eps)]= \d_y V_{\rm{per}} (x/\eps)\in
L^\infty$, since $V_{\rm{per}}$ is smooth and periodic. Thus,
\begin{equation*}
 \|\eps\d_x\eta^\eps(t)\|_{L^2}\le C + C  \int_0^t
 \|\eps\d_x\eta^\eps(s)\|_{L^2}ds + Ct. 
\end{equation*}
Gronwall's lemma now yields $
  \|\eps\d_x\eta^\eps(t)\|_{L^2}\le C(T)$. In view of the
  Gagliardo--Nirenberg inequality, 
\begin{equation*}
 \|\delta^\eps(t)\|_{L^\infty} = \eps^{1/4}\|\eta^\eps(t)\|_{L^\infty}
   \le \sqrt 2
  \eps^{-1/4}\|\eta^\eps\|_{L^2}^{1/2}\|\eps\d_x\eta^\eps\|_{L^2}^{1/2}\le
  C(T)\eps^{1/4}. 
\end{equation*}
For $\eps$ sufficiently small (depending of $T$),
$\|\delta^\eps(t)\|_{L^\infty}\le R$ for all $t\in [0,T]$, and the
result follows from a bootstrap argument. 
 \end{proof}
 The above theorem shows nonlinear stability of the approximate
solution up to times of order $\mathcal O(1)$, i.e. independent of
$\eps$, provided that the initial data are \emph{well-prepared}, in
the sense given in 
\eqref{eq:wp}. Essentially this means that $\psi_0^\eps$ contains not
only $U_0$, but 
also $\widetilde U_1$ associated to $\widetilde \psi^\eps_{\rm
  app}$. We shall not insist further on this aspect, which is probably
a technical artifact, and remark that in the linear case a stronger
result is valid, see  
\cite{CaSp-p}   
where stability is proved up to the so-called \emph{Ehrenfest time} $\mathcal
O(\ln 1/\eps)$, and no well-preparedness as in \eqref{eq:wp} is needed
(an initial error $\O(\eps^r)$ for \emph{some} $r>0$ suffices). 

 \begin{remark}\label{rem:multi}
   If $x\in \R^d$ with $d\ge 2$, the proof can be easily adapted,
   provided an even better approximate solution is constructed. The
   reason is that, instead of \eqref{eq:GN},  
   one needs to rely on the following 
   Gagliardo--Nirenberg inequality 
   \begin{equation*}
  \|\delta^\eps \|_{L^\infty(\R^d)} \le C\eps^{-d/2}
  \|\delta^\eps \|_{L^2(\R^d)}^{1-d/(2s)}\, \|
 \, |\eps\nabla|^s\delta^\eps\|_{L^2(\R^d)}^{d/(2s)}, \quad \mbox{for $s > d/2.$} 
\end{equation*}
   Thus, in order to account for the singular factor $\eps^{-d/2}$,
   one is forced to  
   construct an approximate solution $\widetilde \psi_{\rm app}^\eps$
   to a sufficiently high order in $\eps$ (see \cite{CaMaSp04} for
   more details).  
 \end{remark}

\section{The case of nonlocal nonlinearities}
\label{sec:nonlocal}

In this section we shall show how to perform the same asymptotic
analysis as before in the case of nonlocal nonlinearities. In other
words, we consider  
\begin{equation}\label{eq:r3}
    i\eps\d_t \psi^\eps +\frac{\eps^2}{2}\Delta \psi^\eps=V_{\rm per}
     \left(\frac{x}{\eps}\right)\psi^\eps +
     \eps^\alpha (K\ast |\psi^\eps|^2)\psi^\eps
\end{equation}
with $K(x)\in \R$ some given interaction kernel. In the following we
shall focus on two particular choices of interaction kernels $K$ which
are physically relevant.

\subsection{Homogeneous kernels} \label{sec:homK} In this subsection
we shall consider functions of the form 
\[ 
K(x)= \lambda |x|^\mu, \ \lambda \in\R,\ \mbox{with $\mu\in
  \R\setminus\{0\}$ such that}\ 
-\min(2,d)<\mu \le 2.\] 
For example, the choice $\mu=-1$ in $d=3$ corresponds to the classical
Hartree nonlinearity, modeling a self-consistent, repulsive
($\lambda>0$) Coulomb interaction. The case $\mu>0$ has been recently
studied in \cite{Ma-p}. 

Like in the case of local nonlinearities, the critical exponent
$\alpha_c$ depends on 
the homogeneity $\mu$, namely $\alpha_c = 1-\mu/2$.  
This can be seen as follows:  We plug the ansatz
\begin{equation*}
  \varphi^\eps(t,x)=\eps^{-d/4} u 
\left(t,\frac{x-q(t)}{\sqrt\eps}\right) \chi_m\left(\frac{x}{\eps},
  p_0 \right) \e^{i \phi_m(t,x) / \eps}
\end{equation*}
into the convolution term $\eps^{1-\mu/2}\(|x|^\mu
\ast|\psi^\eps|^2\)$. This yields  
\begin{equation*}
 \eps^{1-\mu/2-d/2}\int_{\R^d} |x-\xi|^\mu \left|
  u\left(t,\frac{\xi-q(t)}{\sqrt\eps}\right)\right|^2 \left|
 \chi_m\left(\frac{\xi}{\eps},
  p_0 \right)  \right|^2d\xi.
\end{equation*} 
We want this term to be of order $\O(\eps)$ in our asymptotic
expansion, to mimic the approach presented in \S\ref{sec:multi}. In
this case, it will consequently appear within $b_2^\eps$,  
leading to the effective mass equation. In order to show that this is
indeed the case, we rewrite the initial convolution as 
\begin{equation*}
\eps^{1-\mu/2-d/2}\int_{\R^d} |\xi|^\mu \left|
  u\left(t,\frac{x-\xi-q(t)}{\sqrt\eps}\right)\right|^2 \left|
  \chi_m\left(\frac{x-\xi}{\eps}, p_0\right) \right|^2d\xi,
\end{equation*}
and use the substitution $z= (x-q(t))/\sqrt\eps$ in the envelope $u$,
and $y=x/\eps$ in $\chi_m$:
\begin{equation*}
\eps^{1-\mu/2-d/2}\int_{\R^d} |\xi|^\mu \left|
  u\left(t,z- \frac{\xi}{\sqrt\eps}\right)\right|^2 \left|
  \chi_m\left(y-\frac{\xi}{\eps}, p_0\right) \right|^2d\xi.
\end{equation*}
Setting $\zeta=\xi/\sqrt\eps$, this can be written as
\begin{equation*}
  \eps \int |\zeta|^\mu \left|
  u\left(t,z- \zeta\right)\right|^2 \left|
  \chi_m\left(y-\frac{\zeta}{\sqrt\eps}, p_0\right) \right|^2d\zeta.
\end{equation*}
Then, the following averaging result can be proved:
\begin{proposition}\label{prop:aver} 
Let \eqref{ass:simple} hold true and assume that $\zeta\mapsto 
|\zeta|^\mu \left|  u\left(t,z- \zeta\right)\right|^2 $ is in
$L^1(\R^d)$. Then, for all $k\in Y^*$, it holds 
\begin{equation*}
 \int_{\R^d} |\zeta|^\mu \left|
  u\left(t,z- \zeta\right)\right|^2 \left|
  \chi_m\left(y-\frac{\zeta}{\sqrt\eps}, k \right) \right|^2d\zeta \Tend
\eps 0 \int_{\R^d} |\zeta|^\mu \left|
  u\left(t,z- \zeta\right)\right|^2 d\zeta.
\end{equation*}
In addition, if $\zeta\mapsto |\zeta|^\mu \left|  u\left(t,z-
    \zeta\right)\right|^2 $ is in $W^{1,1}(\R^d)$,  
then the above convergence holds with an error of order $\O(\sqrt\eps)$.
\end{proposition}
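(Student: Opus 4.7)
The plan is to view the statement as a classical periodic averaging (two-scale) result. Set $g(\zeta) := |\zeta|^\mu \left|u(t,z-\zeta)\right|^2$ and $H_{y,k}(\xi) := \left|\chi_m(y-\xi, k)\right|^2$. Since $\chi_m(\cdot, k)$ is $\Gamma$-periodic, so is $H_{y,k}$ in $\xi$, and by the standard $L^2$-normalization of the Bloch eigenfunctions together with a change of variables, its average on one fundamental cell equals $1$ (independently of $y$ and $k$). The integral on the left-hand side is precisely
\[
I_\eps(y,z,k) = \int_{\R^d} g(\zeta)\, H_{y,k}\!\left(\zeta/\sqrt\eps\right)\,d\zeta,
\]
so the claim reduces to the averaging statement $I_\eps \to \int g\,d\zeta$ as $\eps\to 0$.

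For the $L^1$ case, I would argue by density. Since $H_{y,k}$ is smooth and $\Gamma$-periodic, hence uniformly bounded, any $L^1$-approximation of $g$ by some $\tilde g \in C_c^\infty(\R^d)$ yields a uniform-in-$\eps$ control of the replacement error. For a smooth, compactly supported $\tilde g$, I would partition $\R^d$ into cubes of size $\sqrt\eps$, taken as translates of $\sqrt\eps\,Y$: on each such cube the oscillation of $\tilde g$ is $\O(\sqrt\eps)$, while the integral of $H_{y,k}(\cdot/\sqrt\eps) - 1$ on the cube vanishes thanks to periodicity and the normalization. Summing and letting $\eps\to 0$ gives the result for $\tilde g$, and the density step concludes.

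For the sharp $\O(\sqrt\eps)$ rate when $g\in W^{1,1}(\R^d)$, I would instead use a corrector. Under \eqref{ass:simple}, $\chi_m(\cdot, k)$ is smooth, so $H_{y,k} - 1$ is smooth, $\Gamma$-periodic, with zero mean on $Y$. The periodic Poisson problem $\Delta_\xi \Phi_{y,k} = H_{y,k} - 1$ on the torus $Y$ therefore admits a smooth $\Gamma$-periodic solution. Setting $G_{y,k} := \nabla_\xi \Phi_{y,k}$, the identity $H_{y,k}(\zeta/\sqrt\eps) - 1 = \sqrt\eps\, \nabla_\zeta\cdot[G_{y,k}(\zeta/\sqrt\eps)]$ combined with one integration by parts yields
\[
I_\eps - \int_{\R^d} g\,d\zeta = -\sqrt\eps \int_{\R^d} \nabla g(\zeta)\cdot G_{y,k}(\zeta/\sqrt\eps)\,d\zeta,
\]
whose modulus is bounded by $\sqrt\eps\,\|\nabla g\|_{L^1}\|G_{y,k}\|_{L^\infty}$.

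The only mildly delicate point is the $L^1$-density step, where one must verify that the approximation error together with the boundary-cube leftovers of the partition go to zero as $\eps\to 0$; this is routine since $H_{y,k}$ is uniformly bounded in $(y,\xi,k)$. The rate part is actually the cleaner argument, requiring no more than the smoothness of $\chi_m$ provided by \eqref{ass:simple} to solve the periodic Poisson problem on $Y$.
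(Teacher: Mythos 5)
Your proof is correct, but it takes a genuinely different route from the paper's. The paper expands $y\mapsto|\chi_m(y,k)|^2$ in its Fourier series on the torus, so the left-hand side becomes $\sum_{\gamma\in\Gamma}c_\gamma \e^{i\gamma\cdot y}\int |\zeta|^\mu|u(t,z-\zeta)|^2 \e^{-i\gamma\cdot\zeta/\sqrt\eps}\,d\zeta$; each term with $\gamma\neq0$ vanishes by the Riemann--Lebesgue lemma, only $c_0=1$ survives (the same normalization $\int_Y|\chi_m|^2=1$ that you use), and the $\O(\sqrt\eps)$ rate in the $W^{1,1}$ case comes from one integration by parts in each oscillatory integral combined with the rapid decay of the $c_\gamma$. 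You replace this with the standard homogenization machinery: cell-by-cell averaging on a partition into cubes of side $\sqrt\eps$ plus $L^1$-density for the qualitative statement, and a periodic Poisson corrector for the rate. The two arguments are essentially Fourier-dual to one another --- your corrector $\Phi_{y,k}$ is the multiplier $-|\gamma|^{-2}$ acting on the nonzero modes --- so neither is deeper, but the trade-offs differ slightly: your qualitative step uses only the boundedness of $|\chi_m(\cdot,k)|^2$, whereas the paper's dominated-convergence step over $\gamma\in\Gamma$ implicitly needs summability of the Fourier coefficients, i.e.\ some regularity of $|\chi_m(\cdot,k)|^2$ beyond $L^2$ (harmless here, since \eqref{ass:simple} provides smoothness); conversely, the paper's proof avoids the partition and boundary-cube bookkeeping that you yourself flag as the delicate point of your density argument. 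Your corrector computation for the rate is clean and complete as stated, modulo the routine justification of integrating by parts a $W^{1,1}$ function against a bounded smooth field.
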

This result can be seen as a variant of the \emph{two-scale convergence}
results introduced in \cite{Ng89,Al92}, and used in \cite{AlPi05}. The
main difference here is the convolution structure.
\begin{proof} We decompose $y\mapsto |\chi_m(y,k)|^2$ into its generalized 
Fourier series (recall that $\Gamma \simeq \mathbb Z^d$) and write
\begin{align*}
 \int |\zeta|^\mu \left|
  u\left(t,z- \zeta\right)\right|^2 \left|
  \chi_m\left(y-\frac{\zeta}{\sqrt\eps}\right) \right|^2d\zeta&=
\sum_{\gamma \in \Gamma}\int |\zeta|^\mu \left|
  u\left(t,z- \zeta\right)\right|^2
c_\gamma \e^{i \gamma \cdot (y-\zeta/\sqrt\eps)} d\zeta \\
&=
\sum_{\gamma \in \Gamma}c_\gamma \e^{i \gamma \cdot y}\int |\zeta|^\mu \left|
  u\left(t,z- \zeta\right)\right|^2
\e^{-i \gamma \cdot \zeta/ \sqrt\eps} d\zeta.
\end{align*}
By Riemann--Lebesgue lemma, for each term with $\gamma \not =0$, the
limit, as $\eps \to 0$, is 
zero. Then only the term corresponding to $\gamma=0$ remains, with
\begin{equation*}
  c_0(k) = \int_Y \left|
  \chi_m\left(y,k\right) \right|^2 dy = 1,
\end{equation*}
since the eigenfunctions $\chi_m(\cdot, k)$ form an orthonormal
basis of $L^2(Y)$. 
By the Dominated Convergence Theorem, we can exchange
the sum over $\gamma \in \Gamma$ and the limit $\eps\to 0$ in the
above computation provided that $\zeta\mapsto 
|\zeta|^\mu \left|  u\left(t,z- \zeta\right)\right|^2 $ is in $L^1$
(with an error $o(1)$). In the case where the function is in
$W^{1,1}$ we obtain an error $\O(\sqrt\eps)$. The reason is that
the coefficients $(c_\gamma)_{\gamma \in \Gamma}$
decrease rapidly for large $|\gamma|$, since $y\mapsto
|\chi_m(y,k)|^2$ is smooth, provided Assumption~\eqref{ass:simple}
holds true and  
thus we can perform an integration by parts, and use dominated
convergence again.  
\end{proof}
Assuming that $u$ is sufficiently smooth and decaying, we can use the
above averaging result and perform the same asymptotic expansion as
given in Section \ref{sec:multi} to arrive at  
the effective nonlinear Schr\"odinger equation
\begin{equation}\label{eq:NLprofile1}
  i\d_t u + \frac{1}{2}{\rm div}_z\((\nabla^2_k E_m\(p_0\))\cdot
  \nabla_z \) u = \lambda (|z|^\mu\ast |u|^{2}) u , \quad u_{\mid t=0} =  u_0.
\end{equation}
For $\mu<0$, existence of a smooth solution $u\in C([0,T_c),H^k)$,
locally in time, can be proved along the same lines as in
\cite{CazCourant} and hence, a result analogous to the one stated in
Proposition \ref{prop:app} is straightforward. 
In the case $\mu>0$ one can follow the arguments of \cite{Ma-p}, using a
functional framework which is more intricate, however (the Sobolev spaces
$H^k$ are not sufficient but have to be intersected with weighted
$L^2$ spaces), and we shall not do so here. In a similar spirit,
stability in the sense of Theorem~\ref{thm:stab} follows from an
adaptation of Lemma~\ref{lem:moser}, which we leave out.

\subsection{Smooth kernels} \label{sec:smoothK}
  If in (\ref{eq:r3}) the interaction kernel $K(x)$ is a given smooth
  function, bounded as well as its 
  derivative, then $\alpha_c=1$ (corresponding \emph{formally} to the
  case $\mu=0$). Such a situation appears for example in \cite{Ber99}, where
\begin{equation*}
  K(x) = \(a_1 + a_2 |x|^2 +a_3 |x|^4\) \e^{-A^2|x|^2}+a_4 \e^{-B^2|x|^2} , 
  \end{equation*}
  with constants $ a_1,a_2,a_3,a_4\in \R$, $A,B>0$.
Resuming the above computations in this context, we
  find:
  \begin{align*}
K\ast|\psi^\eps|^2 &= \eps^{-d/2}\int_{\R^d} K(\xi) \left|
  u\left(t,\frac{x-\xi-q(t)}{\sqrt\eps}\right)\right|^2 \left|
  \chi_m\left(\frac{x-\xi}{\eps}, p_0\right) \right|^2d\xi\\
& =\int_{\R^d} K(\zeta\sqrt\eps) \left|
  u\left(t,z-\zeta\right)\right|^2 \left|
  \chi_m\left(y-\frac{\zeta}{\sqrt\eps}, p_0\right) \right|^2d\zeta\\
&\Tend \eps 0 K(0)\int_{\R^d}  \left|
  u\left(t,z-\zeta\right)\right|^2 d\zeta= K(0)\|u(t)\|_{L^2}^2
=K(0)\|u_0\|_{L^2}^2 , 
  \end{align*}
due to mass conservation, along with an $\O(\sqrt\eps)$ convergence
rate under suitable 
assumptions. In particular, this shows that, as $\eps \to 0$, the
nonlinear effects become {\it negligible}.  
Indeed, in this case the envelope equation becomes 
\[
  i\d_t u + \frac{1}{2}{\rm div}_z\((\nabla^2_k E_m\(p_0\))\cdot
  \nabla_z \) u = K(0)\|u_0\|_{L^2}^2 u , \quad u_{\mid t=0} =  u_0.
\]
The right hand side involves a constant potential term, which can be
\emph{gauged away} via  
\[
v(t,x) = u(t,x) \e^{i  t K(0) \| u_0 \|^2_{L^2}}.
\]
The remaining amplitude $v(t,x)$ then solves a {\it free}
Schr\"odinger equation with effective mass tensor $\nabla^2_k
E_m\(p_0\)$.

\appendix

\section{Some useful algebraic identities}
\label{sec:appA}

For the derivation of the effective mass equation \eqref{eq:NLprofile}
we shall rely on several algebraic 
identities, which can be derived from Bloch's spectral problem (for
more details see, e.g., \cite{BeLiPa}):  
First, taking the gradient w.r.t. to $k$ of \eqref{eq:bloch}, we have
\begin{equation}\label{eq:dbloch}
  \nabla_k\(H(k) -E_m\) \chi_m + \(H(k) -E_m\) \nabla_k\chi_m=0
\end{equation}
and, by taking the in $L^2(Y)$-scalar product with $\chi_m$, we obtain
\begin{equation*}
  \nabla_k E_m = \<\chi_m, \nabla_k H(k) \chi_m\>_{L^2(Y)}
  +\<\chi_m,\(H(k) -E_m\) \nabla_k\chi_m\>_{L^2(Y)}.
\end{equation*}
Since $H(k) $
is self-adjoint, the last term is zero, thanks to \eqref{eq:bloch}. We
infer
\begin{equation}
  \label{eq:dkE}
  \nabla_k E_m(k) = \<\chi_m,\(-i \nabla_y+k\)\chi_m\>_{L^2(Y)}.
\end{equation}
Differentiating \eqref{eq:dbloch} again, we have, for all $j,\ell \in
\{1,\dots,d\}$: 
\begin{align*}
  \d_{k_j k_\ell}^2 \(H(k) -E_m\) \chi_m &+ \d_{k_j}\(H(k) -E_m\)
  \d_{k_\ell}\chi_m + \d_{k_\ell}\(H(k) -E_m\) \d_{k_j}\chi_m\\
&+\(H(k) -E_m\)
  \d_{k_j k_\ell}^2\chi_m=0. 
\end{align*}
Taking the scalar product with $\chi_m$, we have:
\begin{equation}
  \label{eq:d2kE}
  \begin{aligned}
   \d_{k_j k_\ell}^2 E_m(k) &= \delta_{j\ell} +\< \(-i
  \d_{y_j}+k_j\)\d_{k_\ell}\chi_m +\(-i
  \d_{y_\ell}+k_\ell\)\d_{k_j}\chi_m  ,\chi_m\>_{L^2(Y)}\\
&\quad -\< \d_{k_\ell}E_m\d_{k_j}\chi_m +\d_{k_j}E_m\d_{k_\ell}\chi_m
,\chi_m\>_{L^2(Y)}.
\end{aligned}
\end{equation}

\section{Adding an additional, slowly varying potential}
\label{sec:appB}
As a possible extension of our study, one might want to consider the
case where the wave function is not only under the  
influence of the nonlinearity and the periodic potential, but also add
an additional 
\emph{slowly varying} external potential $V(t,x)$, i.e.
\[
  i \eps\d_t \psi^\eps + \frac{1}{2} \Delta\psi^\eps = V_{\rm per}
     \left(\frac{x}{\eps}\right)\psi^\eps + V(t,x) \psi^\eps +
     \eps^\alpha f(|\psi^\eps|^2) \psi^\eps. 
\]
At least formally, this can be done by combining our analysis with the
results given in \cite{CaSp-p}: To this end, we define the
semi-classical band Hamiltonian 
\begin{equation*}
  h_m^{\rm sc}(k,x)= E_m(k) +V(t, x),\quad (k,x)\in Y^*\times\R^d,
\end{equation*}
and denote the corresponding semiclassical phase space trajectories by 
\begin{equation}\label{eq:traj}
\left\{  
\begin{aligned}
  \dot q(t)=\nabla_k E_m\(p(t)\),\quad q(0)=q_0,\cr
\dot p(t)= -\nabla_xV\(t,q(t)\),\quad p(0)=p_0.
\end{aligned}
\right.
 \end{equation} 
 This system is the analogue of the classical Hamiltonian phase space
 flow, in the 
 presence of an additional periodic potential $V_{\rm per}$. 
 In order to make sure that the system  \eqref{eq:traj} is
well-defined, it is sufficient to assume that $E_m(p(t))$ is a simple
eigenvalue ($|E_m(p(t))- E_n(k)| \not= 0$ for all $n\not=m$, $ t\in \R$,
$k\in Y^*$); see e.g. \cite{CaSp-p}, where examples of such situations are
given. 

The approximate solution under the form of a coherent state within the
$m$-th Bloch band is then given by: 
\[
  \varphi^\eps(t,x)=\eps^{-d/4} u 
\left(t,\frac{x-q(t)}{\sqrt\eps}\right) \chi_m\left(\frac{x}{\eps},
  p(t) \right) \e^{i \Phi_m(t,x) / \eps}
\]
with  $q(t), p(t)$ obtained from \eqref{eq:traj}. The highly
oscillatory phase takes the form 
$
\Phi_m(t,x) = S_m(t)+p(t)\cdot (x-q(t)),$
where $S_m(t)\in \R$ is the (purely time-dependent) semi-classical action
\[
S_m(t) = \int_0^t  p(s) \cdot \nabla E_m(p(s)) -
h_m^{\rm sc}\(p(s),q(s)\)  \, ds.
\]
Note that $\Phi_m$ simplifies to \eqref{eq:newphase} in the case where
$V(t,x)=0$. 
In this case, the governing equation for the leading profile $u(t,z)$
is found to be a nonlinear Schr\"odinger equation  
with time-dependent quadratic potential, time-dependent effective mass
$\nabla_k^2E_m(p(t))$ and coupling constant  
$\lambda_m(t)$, see \cite{CaSp-p} for more details. These features make it
difficult to give sufficient conditions under which where the solution
$u(t,z)$ is global, i.e. 
$T_c=\infty$. Indeed, the signature of $\nabla_k^2E_m(p(t))$ may
change, and the existence of Strichartz estimates for the linear part
is a non-trivial issue. Moreover, $\lambda_m(t)$ may 
also change sign, making the analysis even more delicate.

\end{document}